\newtheorem{theorem}{Theorem}
\newtheorem{lemma}{Lemma}
\newtheorem{corollary}{Corollary}
\def\edgevector{{\mathbf X}}
\def\numvars{k}
\begin{document}

\title{A Bound on the Expected Optimality\\of Random Feasible Solutions\\to Combinatorial Optimization Problems}
\author{Evan A. Sultanik\\The Johns Hopkins University APL\thanks{Evan is no longer affiliated with Johns Hopkins University.  He is currently the chief scientist at Digital Operatives and adjunct faculty at the Drexel University Department of Computer Science.}\\\url{evan@sultanik.com}\\\url{http://www.sultanik.com/}}

\maketitle

\begin{abstract}
This paper investigates and bounds the expected solution quality of
combinatorial optimization problems when feasible solutions are chosen
at random.  Loose general bounds are discovered, as well as families
of combinatorial optimization problems for which random feasible
solutions are expected to be a constant factor of optimal.  One
implication of this result is that, for graphical problems, if the
average edge weight in a feasible solution is sufficiently small, then
any randomly chosen feasible solution to the problem will be a
constant factor of optimal.  For example, under certain well-defined
circumstances, the expected constant of approximation of a randomly
chosen feasible solution to the Steiner network problem is bounded
above by 3.  Empirical analysis supports these bounds and actually
suggest that they might be tightened.
\end{abstract}

\section{Introduction}

Combinatorial optimization is the process of selecting a
set of points from a finite topological space that maximize or
minimize a given objective function.
%% TODO: cite something here (CLRS?)
It is often very simple to find \emph{feasible} solutions to such
problems, however, it is usually exceedingly hard to
find \emph{optimal} ones.  For example, given a finite set of objects
$y \in Y$ each with value $v : Y \rightarrow \mathbb{R}$ and weight $w
: Y \rightarrow \mathbb{R}_{\geq 0}$, the knapsack problem asks to
find a subset of the objects $S \subseteq Y$ whose combined weight
does not exceed a given maximum, $w_{\text{max}}$, and whose combined
value is maximized:
\begin{displaymath}
  \begin{aligned}
     \text{maximize\ } & \sum_{y \in Y} v(y) x_y  & \\
     \text{subject to:\ }\\     &  \sum_{y \in Y} w(y) x_y \leq w_{\text{max}}, &\\
                    &    x_y \in \{0, 1\}, & \forall y \in Y,
  \end{aligned}
\end{displaymath}
where the chosen set of objects is $S = \{y \in Y : x_y = 1\}$.
Finding a feasible solution to the knapsack problem is trivial: Any
set of objects $S \subseteq Y$ that does not outweigh the maximum,
\begin{displaymath}
  \sum_{s \in S} v(s) \leq w_{\text{max}},
\end{displaymath}
will be a feasible solution.  Such a set of objects can be chosen in
linear time.

As a second example, consider another $\mathbf{NP}$-\textsc{Complete}
problem: the Steiner network problem~\cite{goemans94improved}.  Given
an undirected, edge-weighted graph on $n$ vertices and $k$ edges, $G
= \langle V, E\rangle$, and also given a subset of ``terminal''
vertices $T \subseteq V$ of cardinality $\alpha = |T|$, the Steiner
network problem asks to find a minimum weight forest that spans the
terminals~\cite{aggarwal94scaling}:
\begin{displaymath}
  \begin{aligned}
     \text{minimize\ } & \sum_{e \in E} w(e) x_e & \\
     \text{subject to:\ }\\     &    \sum_{e \in \delta(S)} x_e \geq f(S), & \forall S \subset V : S \neq \emptyset\\
                    &    x_e \geq 0, & \forall e \in E,
  \end{aligned}
\end{displaymath}
where $\delta(S)$ is the set of edges having exactly one endpoint in
$S$ and $f : 2^V \rightarrow \{0, 1\}$ is a function such that $f(S) =
1$ if and only if $\emptyset \neq S \cap T \neq T$.  Despite the fact
that finding an optimal solution is hard, finding a \emph{feasible}
solution is not: Any spanning forest will be a feasible solution to
the Steiner network problem.

Given that finding a feasible solution is often so simple relative to
finding the optimal, we would like to know: In expectation, how much
worse is a given feasible certificate compared to the optimal
solution?  This paper investigates and bounds the expected solution
quality to such problems when feasible solutions are chosen at random.

We proceed by bounding the expected value of the best- and worst-case
outcomes of any choice of objects.  In general, let $m$ be a lower
bound on the number of objects in an optimal solution.  In the case of
the Steiner network problem, for example, $m$ must be greater than or
equal to $\left\lfloor\frac{\alpha}{2}\right\rfloor$ edges.  Let
$\ell$ be an upper bound on the number of objects chosen by a given
algorithm that is guaranteed to produce feasible---but not necessarily
optimal---solutions.  In the worst case, the algorithm will have
chosen the $\ell$ costliest objects while the optimal solution
actually contained the $m$ cheapest objects.  How likely is this
outcome?  If we can bound the ratio of the expected value of the sum
of the $\ell$ costliest objects over the expected value of the sum of
the $m$ cheapest objects, this will be an upper bound on the expected
constant of approximation of the algorithm, \emph{regardless} of how
the algorithm actually chooses a feasible solution, and also
regardless of which combinatorial optimization problem we are solving.

The bound on the expected constant of approximation is codified in the
remainder of this paper.  Section~\ref{sec:framing} serves to
formalize the problem and Section~\ref{sec:bounds} develops the bound,
the theoretical consequences of which are discussed in
Section~\ref{sec:discussion}.  The theoretical bound is supported
through empirical evaluation in Section~\ref{sec:evaluation}.  Proofs
of the various lemmas and theorems are provided at the end of the
paper in~\ref{appendix:proofs}.

\section{Framing the Problem}
\label{sec:framing}

It is often useful to think of combinatorial optimization as the
process of selecting an optimal subset from a given finite set of
objects.  We proceed by treating the cost of each object in a
combinatorial optimization problem as a random variable.  Let
$\edgevector = [X_1, \ldots, X_\numvars]$ be a vector random variable
consisting of identically distributed (but not necessarily
independent) random variables.  In a sense, each variable captures
whether or not one of the $k$ objects from the combinatorial
optimization problem will be chosen to be a part of the solution.  In
the knapsack problem, for example, $k$ is the total number of objects.
In the Steiner network problem, $k$ is the number of edges in the
graph.  The order statistics of $\edgevector$ are denoted
$X_{(1:\numvars)} \leq \ldots \leq X_{(r:\numvars)} \leq \ldots \leq
X_{(\numvars:\numvars)}$.  For $\ell \in \{1, 2, \ldots, \numvars\}$,
let $Y$ be the distribution of the $\ell$ largest order statistics of
$\edgevector$:
\begin{displaymath}
  Y \sim \sum_{r=\numvars-\ell+1}^\numvars X_{(r:\numvars)}.
\end{displaymath}
Similarly, for $m \in \{1, 2, \ldots, \numvars\}$, let $Y^*$ be the
sum of the $m$ smallest order statistics of $\edgevector$:
\begin{displaymath}
 Y^* \sim \sum_{r=1}^m X_{(r:\numvars)}.
\end{displaymath}
Without loss of generality, we assume in the remainder of our analysis
that the objective function is being minimized.  In that case, observe
that $Y$ is the probability distribution of the cost of the costliest
solution of size $\ell$ and, similarly, $Y^*$ is the distribution of
the cost of the best possible solution of size $m$.  The expected
value of $Y$ will be an upper bound on the expected value of the
costliest solution to the problem, whereas the expected value of $Y^*$
will be a lower bound on the expected value of the optimal solution.

We are interested in answering the question: What can be said of the
relationship between $Y$ and $Y^*$?  If $Y$ and $Y^*$ are
nonnegative\footnote{\textit{i.e.}, their distributions are truncated
such that the probability density for all negative values is zero.}
then the expected value of $Y/Y^*$ will be an upper bound on the
expected constant of approximation of any algorithm that produces
feasible solutions to the optimization problem.  In general, for arbitrary $Y$ and $Y^*$, the
expected approximation factor is bounded above by
\begin{equation}
  \label{equ:approxbound}
  1 + \left|\frac{E[Y] - E[Y^*]}{E[Y^*]}\right|.
\end{equation}
We develop an upper bound on this expression in the following section.

\section{Bounds on Order Statistics}
\label{sec:bounds}

Gascuel and Caraux discovered the following bounds on the expected value of order statistics of random variables with symmetrical distributions with mean $\mu$ and variance $\sigma^2$:
\begin{theorem}[Proposition 2 of~\cite{gascuel92bounds}]
  \label{thm:origbounds}
  The bounds
  \begin{displaymath}
    \left.\begin{array}{ll}
      \text{if}\ \frac{r}{\numvars}\leq \frac{1}{2}, & \mu - \sigma\sqrt{\frac{\numvars}{2r}}\\
      \text{if}\ \frac{r}{\numvars}\geq \frac{1}{2}, & \mu - \sigma\sqrt{\frac{\numvars(\numvars-r)}{2r^2}}
    \end{array}\right\}
    \leq E[X_{(r:\numvars)}] \leq
    \left\{\begin{array}{ll}
      \mu + \sigma\sqrt{\frac{\numvars}{2(\numvars-r+1)}} & \text{if}\ \frac{r-1}{\numvars} \geq \frac{1}{2},\\
      \mu + \sigma\sqrt{\frac{\numvars(r-1)}{2(\numvars-r+1)^2}} & \text{if}\ \frac{r-1}{\numvars} \leq \frac{1}{2},
    \end{array}\right.
  \end{displaymath}
  are valid and may be reached for some distributions.
\end{theorem}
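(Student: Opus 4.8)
The plan is to collapse the four inequalities to one clean estimate, add a refinement in one regime, and exhibit extremizers. \emph{First}, normalize: replacing each $X_i$ by $(X_i-\mu)/\sigma$ we may take $\mu=0$, $\sigma=1$, since order statistics are affine-equivariant, symmetry of the common marginal is preserved, and the stated bounds are affine in $\mu,\sigma$. Throughout, the class over which a bound must hold is all $\numvars$-vectors whose coordinates share a common symmetric marginal of variance $1$, with the joint law otherwise arbitrary. \emph{Second}, use reflection symmetry: $(X_1,\dots,X_\numvars)\stackrel{d}{=}(-X_1,\dots,-X_\numvars)$ forces $X_{(r:\numvars)}\stackrel{d}{=}-X_{(\numvars-r+1:\numvars)}$, hence $E[X_{(r:\numvars)}]=-E[X_{(\numvars-r+1:\numvars)}]$; a direct check shows the stated upper bound for rank $r$ when $(r-1)/\numvars\ge\tfrac12$ is exactly the negative of the stated lower bound for rank $\numvars-r+1$ when $(\numvars-r+1)/\numvars\le\tfrac12$, and likewise for the other regime. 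So it suffices to prove the two lower bounds.

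\emph{Third}, the main lower bound. I would prove the pointwise inequality, valid for all reals $x_1,\dots,x_\numvars$,
\begin{displaymath}
  \big(\mu - x_{(r:\numvars)}\big)_+^2 \;\le\; \frac{1}{r}\sum_{i=1}^{\numvars}\big(\mu - x_i\big)_+^2 ,
\end{displaymath}
which holds because when $x_{(r:\numvars)}<\mu$ the $r$ smallest coordinates each satisfy $(\mu-x_i)_+\ge\mu-x_{(r:\numvars)}>0$, so the sum on the right already has $r$ terms of size at least $(\mu-x_{(r:\numvars)})^2$, while otherwise the left side vanishes. Taking expectations, then Jensen's inequality for the concave square root, then identical distribution, and finally symmetry in the form $E[(\mu-X_1)_+^2]=\tfrac12E[(X_1-\mu)^2]=\tfrac12$, gives
\begin{displaymath}
  \mu-E[X_{(r:\numvars)}]\;\le\;E\big[(\mu-X_{(r:\numvars)})_+\big]\;\le\;\sqrt{\frac{\numvars}{r}\,E\big[(\mu-X_1)_+^2\big]}\;=\;\sqrt{\frac{\numvars}{2r}},
\end{displaymath}
which is the claim for $r/\numvars\le\tfrac12$ (and, via the symmetry reduction, the upper bound for $(r-1)/\numvars\ge\tfrac12$).

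\emph{Fourth}, the refined lower bound for $r/\numvars\ge\tfrac12$, where $(\mu-X_{(r:\numvars)})_+$ is typically $0$ and the estimate above is wasteful. The improvement comes from also exploiting that \emph{at most} $\numvars-r$ coordinates can strictly exceed $x_{(r:\numvars)}$: one writes $r\big(\mu-x_{(r:\numvars)}\big)\le \sum_{i>r}(x_{(i:\numvars)}-\mu)-\sum_i(x_i-\mu)$, bounds the $(\numvars-r)$-term sum $\sum_{i>r}(x_{(i:\numvars)}-\mu)$ by Cauchy--Schwarz, takes expectations (using that $\sum_i(X_i-\mu)$ has mean $0$, together with identical distribution and symmetry as above), and optimizes; the extra factor $\numvars-r$ is what upgrades $\sqrt{\numvars/(2r)}$ to $\sqrt{\numvars(\numvars-r)/(2r^2)}$. \emph{Finally}, for the ``may be reached'' clause I would exhibit, in each regime, an extremal joint law of $\numvars$ identically distributed symmetric variables; for $r/\numvars\le\tfrac12$ the natural candidate has a three-valued marginal putting mass $r/\numvars$ on $\mu-\sigma\sqrt{\numvars/(2r)}$, mass $r/\numvars$ on the mirror value $\mu+\sigma\sqrt{\numvars/(2r)}$, and the rest on $\mu$, with the coupling arranged so that in every realization exactly $r$ coordinates take the low value (so $X_{(r:\numvars)}$ is constant), which is consistent with the required marginal precisely because $r\le\numvars/2$; one then checks that all of the above inequalities (pointwise, Jensen, and $\mu-E[X_{(r:\numvars)}]\le E[(\mu-X_{(r:\numvars)})_+]$) hold with equality. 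The step I expect to fight with is the fourth: extracting the sharp $\numvars-r$ factor needs the two-sided counting plus a constrained optimization, and the Cauchy--Schwarz estimate there carries sign conditions that must be dispatched carefully; confirming the extremal law for the regime $r/\numvars\ge\tfrac12$ is the secondary fiddly point.
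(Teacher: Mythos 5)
This theorem is the one statement in the paper that has no proof to compare against: it is imported verbatim as Proposition~2 of Gascuel and Caraux, and the appendix of proofs begins only with Lemma~1. So your proposal is not an alternative route to the paper's argument; it is a self-contained derivation of a black-box ingredient. Judged on its own, the skeleton is sound. The pointwise inequality $(\mu-x_{(r:k)})_+^2\le\frac{1}{r}\sum_i(\mu-x_i)_+^2$, followed by Jensen for $\sqrt{\cdot}$, identical distribution, and $E[(\mu-X_1)_+^2]=\sigma^2/2$ from marginal symmetry, correctly yields $E[X_{(r:k)}]\ge\mu-\sigma\sqrt{k/(2r)}$, and notably uses no independence, which is exactly the generality the theorem claims. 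Your refined regime also completes: from $r(\mu-x_{(r:k)})\le\sum_{i>r}(x_{(i:k)}-\mu)-\sum_i(x_i-\mu)$, bound $\sum_{i>r}(x_{(i:k)}-\mu)\le\sqrt{k-r}\,\bigl(\sum_i(x_i-\mu)_+^2\bigr)^{1/2}$ by Cauchy--Schwarz on the positive parts, take expectations, and the factor $\sqrt{k(k-r)/2}$ falls out with no further optimization. The three-point coupling with exactly $r$ coordinates low and $r$ high (possible precisely when $2r\le k$) does attain the first bound.

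Two points need attention. First, your reduction of the upper bounds to the lower bounds asserts $(X_1,\dots,X_k)\stackrel{d}{=}(-X_1,\dots,-X_k)$, but joint reflection symmetry does not follow from symmetry of the common marginal, which is all the theorem assumes. The reduction survives anyway, for a weaker reason you should state instead: the class of admissible joint laws is closed under coordinatewise negation about $\mu$, and $(2\mu-X)_{(r:k)}=2\mu-X_{(k-r+1:k)}$, so the supremum of $E[X_{(r:k)}]$ over the class is $2\mu$ minus the infimum of $E[X_{(k-r+1:k)}]$, with the rank and regime conditions matching exactly as you checked. Second, the ``may be reached'' clause in the regime $r/k\ge\tfrac12$ is asserted but not constructed; as written that half of the statement is unproved. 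Neither issue breaks the approach, but both need to be repaired before this counts as a complete proof rather than a plan.
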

With the result of Theorem~\ref{thm:origbounds} and some intermediary
bounds on the generalized harmonic series, we are led directly to a
loose lower bound for $E[Y^*]$ and a loose upper bound for $E[Y]$.

To achieve these bounds, let us first constrain the magnitude of $m$
and $\ell$ with respect to $\numvars$:
\begin{displaymath}
  \frac{m}{\numvars} \leq \frac{1}{2} \wedge \frac{\ell-1}{\numvars} \leq \frac{1}{2}.
\end{displaymath}
These constraints appear to be reasonable: The family of combinatorial
optimization problems that conform to these constraints is large.  In
the context of the Steiner network problem, for example, these
constraints equate to requiring that both the optimal and feasible
solutions encompass fewer than half of all possible edges in the
graph.  This assumption clearly holds for almost every simple graph
since
\begin{enumerate}
  \item such graphs have ${n \choose 2} = \frac{n}{2}(n-1)$ possible edges;
  \item any feasible solution to the Steiner network problem will have at most $n-1$ edges; and
  \item $\forall n \geq 4 : n-1 \leq \frac{n}{4}(n-1)$.
\end{enumerate}

The constraints on the magnitude of $m$ and $\ell$ lead to the following loose bounds:
\begin{theorem}[Proof of this and the remainder of the theorems in this paper are given in~\ref{appendix:proofs}]
  \label{thm:mlb}
  \begin{displaymath}
    \frac{m}{\numvars} \leq \frac{1}{2} \implies E[Y^*] > m\mu - \sigma\sqrt{2\numvars}\left(\sqrt{m+1} - 1\right).
  \end{displaymath}
\end{theorem}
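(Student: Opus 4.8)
The plan is to reduce the statement to linearity of expectation together with a single elementary estimate of a partial sum of the generalized harmonic series. First I would invoke linearity of expectation to write $E[Y^*] = \sum_{r=1}^{m} E[X_{(r:\numvars)}]$, which turns the problem into one about individual order statistics, where Theorem~\ref{thm:origbounds} applies directly. The hypothesis $m/\numvars \le 1/2$ is exactly what makes every term usable: for each $r \in \{1,\ldots,m\}$ we have $r/\numvars \le m/\numvars \le 1/2$, so the small-index branch of Theorem~\ref{thm:origbounds} gives $E[X_{(r:\numvars)}] \ge \mu - \sigma\sqrt{\numvars/(2r)}$ for every such $r$.

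Summing these $m$ inequalities and factoring out the $r$-independent quantities gives
\begin{displaymath}
  E[Y^*] \;\ge\; m\mu - \sigma\sqrt{\frac{\numvars}{2}}\sum_{r=1}^{m}\frac{1}{\sqrt{r}} .
\end{displaymath}
From here the only remaining ingredient is an upper bound on the partial sum $\sum_{r=1}^{m} r^{-1/2}$ in a form that combines cleanly with the prefactor $\sigma\sqrt{\numvars/2}$. I would obtain this from an integral/telescoping comparison for the decreasing map $x \mapsto x^{-1/2}$, aiming for an estimate of the shape $\sum_{r=1}^{m} r^{-1/2} < 2(\sqrt{m+1} - 1)$; substituting it and using $\sigma\sqrt{\numvars/2}\cdot 2 = \sigma\sqrt{2\numvars}$ yields $E[Y^*] > m\mu - \sigma\sqrt{2\numvars}(\sqrt{m+1} - 1)$, which is the claim. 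The strict inequality in the conclusion is inherited from the strict inequality in the harmonic-series estimate; if desired, one could also note that the extremal distributions making the individual bounds of Theorem~\ref{thm:origbounds} tight differ from index to index, so no single distribution makes the summed bound an equality.

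The linearity step and the final substitution are routine; the one place that requires care is the harmonic-series estimate. The two obvious Riemann-sum comparisons for $x^{-1/2}$ bracket $\sum_{r=1}^{m} r^{-1/2}$ from opposite sides, so one must be deliberate about which comparison is actually an \emph{upper} bound, and coaxing the estimate into precisely the $\sqrt{m+1}$ form that produces the stated constant $\sqrt{2\numvars}$ (rather than a looser $\sqrt{m}$-type form) is the delicate bookkeeping that this proof turns on. I expect that to be the main obstacle; everything surrounding it is mechanical.
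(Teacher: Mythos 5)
Your decomposition is the same as the paper's: linearity of expectation, the small-index branch of Theorem~\ref{thm:origbounds} applied term by term (justified, exactly as you say, by $r/\numvars \le m/\numvars \le 1/2$ for every $r\le m$), and then an estimate of $H_{m,\frac12}=\sum_{r=1}^m r^{-1/2}$. The gap is in that last step, and you put your finger on the danger yourself before stepping into it: to lower-bound $m\mu - \sigma\sqrt{\numvars/2}\,H_{m,\frac12}$ you need an \emph{upper} bound on $H_{m,\frac12}$, but the estimate you propose, $\sum_{r=1}^m r^{-1/2} < 2(\sqrt{m+1}-1)$, is the \emph{lower}-bound side of the Riemann-sum comparison: since $x\mapsto x^{-1/2}$ is decreasing, $r^{-1/2} > \int_r^{r+1}x^{-1/2}\,dx$, hence $\sum_{r=1}^m r^{-1/2} > 2\sqrt{m+1}-2$. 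Already at $m=1$ your proposed inequality reads $1 < 2(\sqrt{2}-1)\approx 0.83$, which is false. The integral comparison in the direction you actually need gives $\sum_{r=1}^m r^{-1/2} \le 1+\int_1^m x^{-1/2}\,dx = 2\sqrt{m}-1$, and substituting that yields only $E[Y^*] \ge m\mu - \sigma\sqrt{2\numvars}\left(\sqrt{m}-\tfrac12\right)$, which is strictly weaker than the stated conclusion because $\sqrt{m}-\tfrac12 > \sqrt{m+1}-1$ for all $m\ge 1$. So this route, carried out correctly, does not reach the constant $\sqrt{m+1}-1$.

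For what it is worth, the paper's own proof makes the identical directional slip: Lemma~\ref{lem:bounds} establishes $2\sqrt{n+1}-2 < H_{n,\frac12} \le 2\sqrt{n}-1$, and the proof of Theorem~\ref{thm:mlb} invokes the left-hand (lower-bound) branch of that lemma where the right-hand (upper-bound) branch is what the argument requires. You have therefore reproduced the paper's argument faithfully, flaw included; but as a self-contained proof of the statement as written the step fails. To make the argument sound you would either have to weaken the theorem's constant from $\sqrt{m+1}-1$ to $\sqrt{m}-\tfrac12$, or supply a genuinely new ingredient explaining why the per-index bounds of Theorem~\ref{thm:origbounds} cannot all be tight simultaneously (your closing remark gestures at this, but it is doing no work in the proof as proposed).
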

\begin{theorem}
  \label{thm:lub}
  \begin{displaymath}
    \frac{\ell - 1}{\numvars} \leq \frac{1}{2} \implies E[Y] \leq\ell\mu + \frac{\sigma\sqrt{2\numvars}}{2}\left(2\sqrt{\ell} - 1\right).
  \end{displaymath}
\end{theorem}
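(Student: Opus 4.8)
The plan is to apply linearity of expectation to write $E[Y]=\sum_{r=\numvars-\ell+1}^{\numvars}E[X_{(r:\numvars)}]$, bound each summand above using Theorem~\ref{thm:origbounds}, re-index the resulting sum into a generalized harmonic series $\sum_{j=1}^{\ell}j^{-1/2}$, and finish with an integral-comparison estimate of that series.

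First I would establish that, for every index $r$ appearing in the sum (indeed for every $r\in\{1,\dots,\numvars\}$), the quantity $\mu+\sigma\sqrt{\numvars/(2(\numvars-r+1))}$ is a valid upper bound for $E[X_{(r:\numvars)}]$. When $\frac{r-1}{\numvars}\ge\frac{1}{2}$ this is precisely the first branch of the upper bound in Theorem~\ref{thm:origbounds}. When $\frac{r-1}{\numvars}\le\frac{1}{2}$ the applicable branch is $\mu+\sigma\sqrt{\numvars(r-1)/(2(\numvars-r+1)^2)}$; a one-line check shows $\frac{r-1}{\numvars-r+1}\le 1$ exactly when $\frac{r-1}{\numvars}\le\frac{1}{2}$, so on that range the first branch dominates the second and hence remains an upper bound. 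Thus in all cases $E[X_{(r:\numvars)}]\le\mu+\sigma\sqrt{\numvars/(2(\numvars-r+1))}$. (The hypothesis $\frac{\ell-1}{\numvars}\le\frac{1}{2}$ keeps the summation inside the regime over which the companion Theorem~\ref{thm:mlb} is phrased; for the present bound it is needed only for bookkeeping.)

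Summing this bound over $r=\numvars-\ell+1,\dots,\numvars$ and substituting $j=\numvars-r+1$, so that $j$ runs from $1$ to $\ell$, yields $E[Y]\le\ell\mu+\sigma\sqrt{\numvars/2}\sum_{j=1}^{\ell}j^{-1/2}$. It then remains to bound the generalized harmonic sum. Using monotonicity of $x^{-1/2}$, for $j\ge 2$ we have $j^{-1/2}\le\int_{j-1}^{j}x^{-1/2}\,dx$, so $\sum_{j=2}^{\ell}j^{-1/2}\le\int_{1}^{\ell}x^{-1/2}\,dx=2\sqrt{\ell}-2$; adding the $j=1$ term gives $\sum_{j=1}^{\ell}j^{-1/2}\le 2\sqrt{\ell}-1$. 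Substituting this and rewriting $\sqrt{\numvars/2}=\sqrt{2\numvars}/2$ produces $E[Y]\le\ell\mu+\tfrac{\sigma\sqrt{2\numvars}}{2}\bigl(2\sqrt{\ell}-1\bigr)$, which is the claimed inequality.

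The only genuine subtlety is the first step: the upper bound in Theorem~\ref{thm:origbounds} is piecewise, and the hypothesis $\frac{\ell-1}{\numvars}\le\frac{1}{2}$ only forces $\frac{r-1}{\numvars}$ to be at least $\frac{1}{2}-\frac{1}{\numvars}$ at the bottom of the summation range, so one really must handle (or, as above, dominate away) the second branch rather than simply quoting the first. Everything after that is a routine re-indexing together with the standard integral estimate for $\sum_{j} j^{-1/2}$.
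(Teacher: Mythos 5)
Your proof is correct and follows essentially the same route as the paper's: bound each $E[X_{(r:k)}]$ from above via Theorem~\ref{thm:origbounds}, re-index the resulting sum to the generalized harmonic series $H_{\ell,\frac{1}{2}}$, and apply the integral estimate $H_{\ell,\frac{1}{2}} \leq 2\sqrt{\ell}-1$, which the paper isolates as Lemma~\ref{lem:bounds} and you simply inline. The one place you go beyond the paper is your handling of the piecewise upper bound: the paper asserts that the hypothesis $\frac{\ell-1}{k} \leq \frac{1}{2}$ makes Theorem~\ref{thm:origbounds} ``simplify'' to the branch $\mu + \sigma\sqrt{k/(2(k-r+1))}$, even though that hypothesis only guarantees $\frac{r-1}{k} \geq \frac{1}{2} - \frac{1}{k}$ at the bottom of the summation range, whereas you verify directly that this expression dominates the other branch exactly when $\frac{r-1}{k} \leq \frac{1}{2}$, so it is an upper bound for every $r$ unconditionally. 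That one-line domination check is correct and in fact repairs a loosely justified step in the paper's own argument.
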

%% \begin{lemma}
%%   \label{lem:ordering}
%%   Let $\overline{E[Y]}$ be the upper bound of Theorem~\ref{thm:lub} and $\underline{E[Y^*]}$ be the lower bound of Theorem~\ref{thm:mlb}.  Then,
%%   \begin{displaymath}
%%     \ell \geq m \vee \mu = 0 \implies \overline{E[Y]} \geq \underline{E[Y^*]}.
%%   \end{displaymath}
%% \end{lemma}
It is possible to bound the expected approximation factor of Equation~(\ref{equ:approxbound}) by utilizing the bounds on $E[Y^*]$ and $E[Y]$:
\begin{theorem}
  \label{thm:finalbounds}
  %%Let $\overline{E[Y]}$ be the upper bound of Theorem~\ref{thm:lub2} and $\underline{E[Y^*]}$ be the lower bound of Theorem~\ref{thm:mlb}.  Then
  %%If $m \leq \ell \leq \frac{\numvars}{2}$ then
  If $2(\ell - 1) \leq k$ and $2m \leq k$ then
  \begin{displaymath}
    1 + \left|\frac{E[Y] - E[Y^*]}{E[Y^*]}\right| \leq
    \begin{cases}
      \displaystyle\frac{\ell\mu - 2m\mu + 3\sigma\numvars}{\sigma\numvars - m\mu}, & \text{if}\ \mu < \sigma\sqrt{2k}\frac{\sqrt{m+1}-1}{m}\\
      &\\
      \displaystyle%%\frac{\ell\mu + \sigma\sqrt{2k}(\sqrt{\ell} - \frac{1}{2})}{m\mu - \sigma\sqrt{2k}(\sqrt{m+1}-1)} \leq
      \frac{\ell\mu + \sigma k}{m\mu - \sigma k}, & \text{if}\ \mu > \sigma\sqrt{2k}\frac{\sqrt{m+1}-1}{m},
    \end{cases}
%% \\
%%     \leq
%%     \begin{cases}
%%       2, & \text{if}\ \ell\mu \leq -\sigma k \vee m\mu > \sigma k\\
%%       3, & \text{if}\ \mu \leq 0\\
%%       4, & \text{if}\ \mu(\ell + 2m) \leq \sigma k.
%%     \end{cases}
  \end{displaymath}
  which simplifies to
  \begin{displaymath}
    \leq\begin{cases}
      2, & \text{if}\ \mu < \sigma\sqrt{2k}\frac{\sqrt{m+1}-1}{m} \wedge \left(\ell\mu \leq -\sigma k \vee m\mu > \sigma k\right)\\
      3, & \text{if}\ \mu \leq 0\\
      4, & \text{if}\ \mu(\ell + 2m) \leq \sigma k\\
      \varepsilon \in \mathbb{R}_{\geq 2}, & \text{if}\ \mu \leq \frac{\sigma k(\varepsilon - 3)}{\ell - 2m + \varepsilon m} < \sigma\sqrt{2k}\frac{\sqrt{m+1}-1}{m}\\
      \varepsilon \in \mathbb{R}_{\geq k}, & \text{if}\ \sigma\sqrt{2k}\frac{\sqrt{m+1}-1}{m} < \mu \leq \frac{\sigma(\varepsilon-k)}{\ell}.
    \end{cases}
%% \\
%%     \leq
%%     \begin{cases}
%%       2, & \text{if}\ \ell\mu \leq -\sigma k \vee m\mu > \sigma k\\
%%       3, & \text{if}\ \mu \leq 0\\
%%       4, & \text{if}\ \mu(\ell + 2m) \leq \sigma k.
%%     \end{cases}
  \end{displaymath}
  %% \begin{displaymath}
  %%   1 + \left|\frac{E[Y] - E[Y^*]}{E[Y^*]}\right| \leq \frac{\ell\mu - 2m\mu + 3\sigma\numvars}{\sigma\numvars - m\mu} \leq
  %%   \begin{cases}
  %%     2, & \text{if}\ \ell\mu \leq -\sigma k \vee m\mu > \sigma k\\
  %%     3, & \text{if}\ \mu \leq 0\\
  %%     4, & \text{if}\ \mu(\ell + 2m) \leq \sigma k\\
  %%     c \in \mathbb{R}_{\geq 2}, & \text{if}\ \mu(\ell - 2m + cm) \leq \sigma k(c - 3).
  %%   \end{cases}
  %% \end{displaymath}
\end{theorem}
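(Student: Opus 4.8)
The plan is to push the one-sided estimates of Theorems~\ref{thm:mlb} and~\ref{thm:lub} through the general bound~(\ref{equ:approxbound}), after first trading the generalized-harmonic remainders for multiples of $\sigma k$. Write $\tau := \sigma\sqrt{2k}\,\tfrac{\sqrt{m+1}-1}{m}$ for the threshold appearing in the case split. The hypotheses $2(\ell-1)\le k$ and $2m\le k$ (which, together with $m\ge 1$, already force $k\ge 2$) make this trade routine: since $(\sqrt{m+1}-1)^2\le m$ we have $2(\sqrt{m+1}-1)^2\le 2m\le k$, and multiplying by $k$ and taking square roots gives $\sigma\sqrt{2k}(\sqrt{m+1}-1)<\sigma k$; a similarly short computation from $2\ell\le k+2$ and $k\ge 2$ gives $\tfrac{\sigma\sqrt{2k}}{2}(2\sqrt\ell-1)\le\sigma k$. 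Hence Theorems~\ref{thm:mlb} and~\ref{thm:lub} yield $E[Y]\le\ell\mu+\sigma k$ and $E[Y^*]>m\mu-\sigma k$, while the unrelaxed estimate $E[Y^*]>m(\mu-\tau)$ is kept wherever the sign of $m(\mu-\tau)$ is what matters.

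Next I would remove the absolute value in~(\ref{equ:approxbound}) by splitting on whether $\mu>\tau$ or $\mu<\tau$, i.e.\ on the sign of the lower estimate $m(\mu-\tau)$ for $E[Y^*]$. If $\mu>\tau$, then $E[Y^*]>0$ and $1+\bigl|\tfrac{E[Y]-E[Y^*]}{E[Y^*]}\bigr|$ collapses to $E[Y]/E[Y^*]$; replacing $E[Y]$ by its upper estimate and $E[Y^*]$ by the relaxed lower estimate $m\mu-\sigma k$ gives $\tfrac{\ell\mu+\sigma k}{m\mu-\sigma k}$. If $\mu<\tau$, the lower estimate for $E[Y^*]$ is non-positive, so instead I would keep $1+\tfrac{|E[Y]-E[Y^*]|}{|E[Y^*]|}$ and control the two pieces separately: the numerator is at most $(\ell\mu+\sigma k)+(\sigma k-m\mu)$ from Theorems~\ref{thm:lub} and~\ref{thm:mlb}, and $|E[Y^*]|$ is at least $\sigma k-m\mu$ --- an estimate valid throughout this regime, obtained by feeding Theorem~\ref{thm:origbounds} into the $m$ \emph{largest} order statistics of $\edgevector$ and using the symmetry of their common distribution. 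This gives $1+\tfrac{(\ell\mu+\sigma k)+(\sigma k-m\mu)}{\sigma k-m\mu}=\tfrac{\ell\mu-2m\mu+3\sigma k}{\sigma k-m\mu}$. Assembling the two regimes yields the first displayed bound, and since $\sigma\sqrt{2k}(\sqrt{m+1}-1)<\sigma k$ one has $\sigma k-m\mu>0$ throughout $\mu<\tau$, so the expression is well defined.

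The ``simplified'' list is then pure bookkeeping: for each prescribed ceiling $c$ one imposes ``(relevant branch)\,$\le c$'' and solves the resulting linear inequality for $\mu$. On the branch $\mu<\tau$, $\tfrac{\ell\mu-2m\mu+3\sigma k}{\sigma k-m\mu}\le\varepsilon$ clears (using $\sigma k>m\mu$) to $\mu(\ell-2m+\varepsilon m)\le\sigma k(\varepsilon-3)$, and since $\ell-2m+\varepsilon m\ge\ell>0$ whenever $\varepsilon\ge 2$ this is $\mu\le\tfrac{\sigma k(\varepsilon-3)}{\ell-2m+\varepsilon m}$; specialising $\varepsilon=4$ gives $\mu(\ell+2m)\le\sigma k$, $\mu\le 0$ alone forces the value $3$ because then the numerator is at most $3\sigma k$ and the denominator at least $\sigma k$, and $\ell\mu\le-\sigma k$ forces the value $2$. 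Treating the branch $\mu>\tau$ in the same way and relaxing its denominator to read off a convenient linear condition produces the remaining two rows ($\varepsilon\ge k$, $\mu\le\tfrac{\sigma(\varepsilon-k)}{\ell}$).

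The step that carries all the weight is the second paragraph. Because $E[Y^*]$ appears in the denominator of~(\ref{equ:approxbound}) and the only sharp handle on it, Theorem~\ref{thm:mlb}, does not pin down its sign, the absolute value cannot be discharged uniformly; the parameter space must be cut along $\mu=\tau$ into a region where $E[Y^*]$ is provably positive and one where $|E[Y^*]|$ is provably bounded below, and in the latter region one has to supplement Theorem~\ref{thm:mlb} with an upper estimate for $E[Y^*]$ (via Theorem~\ref{thm:origbounds} and symmetry) so that every denominator appearing keeps a fixed sign. The rest --- simplifying the remainders in the first paragraph, and the linear algebra of the third --- is mechanical.
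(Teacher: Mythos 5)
Your overall architecture---relaxing the $\sqrt{2\numvars}$-terms of Theorems~\ref{thm:mlb} and~\ref{thm:lub} to multiples of $\sigma k$ via $2m\le k$ and $2(\ell-1)\le k$, then splitting at the threshold $\tau$ where the lower estimate for $E[Y^*]$ changes sign---matches the paper's, and your first and third paragraphs (the relaxations and the linear bookkeeping for the simplified ceilings) are sound. The gap is in the branch $\mu<\tau$, precisely the step you flag as carrying all the weight. Your claimed estimate $|E[Y^*]|\ge\sigma k-m\mu$ cannot be extracted from Theorem~\ref{thm:origbounds}: that theorem only sandwiches each $E[X_{(r:\numvars)}]$ between $\mu-(\cdot)$ and $\mu+(\cdot)$, so feeding it into the $m$ largest order statistics and reflecting by symmetry yields an \emph{upper} bound on $E[Y^*]$ of the form $m\mu+(\text{nonnegative})$, never one of the form $m\mu-\sigma k$. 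Worse, the estimate is false. The $X_i$ are identically distributed but need not be independent, so take them perfectly correlated: all equal to a single symmetric variable with mean $\mu=0$ and variance $\sigma^2$. Then every order statistic equals that variable, $E[Y^*]=m\mu=0$, and $|E[Y^*]|=0<\sigma k=\sigma k-m\mu$, while $\mu=0<\tau$ puts you squarely in the regime at issue. Since $E[Y^*]$ can sit anywhere in the interval $[m\mu-\sigma k,\ m\mu+(\cdot)]$, including arbitrarily close to zero, no lower bound on $|E[Y^*]|$ of the required form exists, and the chain $1+\frac{(\ell\mu+\sigma k)+(\sigma k-m\mu)}{\sigma k-m\mu}$ does not follow.

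For comparison, the paper's own proof of this branch takes a different (and itself delicate) route: it substitutes the lower bound $\underline{E[Y^*]}$ of Theorem~\ref{thm:mlb} for $E[Y^*]$ in \emph{both} the numerator and the absolute-valued denominator, i.e.\ it bounds the expression by $1+\bigl(\overline{E[Y]}-\underline{E[Y^*]}\bigr)/\bigl|\underline{E[Y^*]}\bigr|$, and then relaxes to $\frac{\ell\mu-2m\mu+3\sigma\numvars}{\sigma\numvars-m\mu}$ without ever asserting a lower bound on $|E[Y^*]|$ itself. You land on the same final expression, but by a step that is provably wrong rather than merely unjustified, so the branch needs to be reworked. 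Your treatment of the branch $\mu>\tau$ agrees with the paper's (including the shared, unstated reliance on $m\mu-\sigma k>0$ when the denominator is relaxed), and your reading of the five simplified cases is consistent with the statement.
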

\begin{corollary}
  \label{cor:ub}
  If
  \begin{displaymath}
    \ell > \exp\left(2\sqrt{m+1}-3\right) \wedge \mu > \sigma\sqrt{2\numvars}\frac{\sqrt{m+1}-1}{m} \wedge m \leq \ell \leq \frac{\numvars}{2},
  \end{displaymath}
  then the upper bound simplifies to $3$.
\end{corollary}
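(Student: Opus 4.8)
The plan is to peel Corollary~\ref{cor:ub} back to one scalar inequality and then spend the three hypotheses on it. The first two hypotheses are tailor-made: $m\le\ell\le\numvars/2$ supplies $2m\le\numvars$ and $2(\ell-1)\le 2\ell\le\numvars$, so Theorem~\ref{thm:finalbounds} applies, while $\mu>\sigma\sqrt{2\numvars}\,\frac{\sqrt{m+1}-1}{m}$ is exactly the predicate that selects its second branch, namely $\frac{\ell\mu+\sigma\numvars}{m\mu-\sigma\numvars}$. Rather than fight with that already-coarsened form, I would go back to the sharp estimates underneath it: Theorems~\ref{thm:mlb} and~\ref{thm:lub} give $E[Y]\le\ell\mu+\frac{\sigma\sqrt{2\numvars}}{2}\bigl(2\sqrt\ell-1\bigr)$ and $E[Y^*]>m\mu-\sigma\sqrt{2\numvars}\bigl(\sqrt{m+1}-1\bigr)$, the $\mu$-hypothesis makes the latter strictly positive, and $m\le\ell$ forces $E[Y]\ge E[Y^*]$, so the absolute value in Equation~(\ref{equ:approxbound}) resolves to $E[Y]/E[Y^*]$ and is at most $\dfrac{\ell\mu+\frac{\sigma\sqrt{2\numvars}}{2}(2\sqrt\ell-1)}{m\mu-\sigma\sqrt{2\numvars}(\sqrt{m+1}-1)}$.

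Clearing the (now positive) denominator, it suffices to prove $(3m-\ell)\mu\ge\sigma\sqrt{2\numvars}\bigl(3\sqrt{m+1}+\sqrt\ell-\tfrac72\bigr)$. I would split the bracketed coefficient into the part coming from the $E[Y^*]$ correction, $3(\sqrt{m+1}-1)$, and the part coming from the $E[Y]$ correction, $\sqrt\ell-\tfrac12$, and bound each by a controlled fraction of $\mu$. For the first, the convenient rewriting of the $\mu$-hypothesis is $\mu>\sigma\sqrt{2\numvars}\,\frac{1}{\sqrt{m+1}+1}$ (rationalize the numerator: $\frac{\sqrt{m+1}-1}{m}=\frac{1}{\sqrt{m+1}+1}$), which absorbs $3\sigma\sqrt{2\numvars}(\sqrt{m+1}-1)$ into a fixed multiple of $m\mu$. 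The load-bearing hypothesis is $\ell>\exp(2\sqrt{m+1}-3)$: written as $\sqrt\ell>e^{\sqrt{m+1}-3/2}$, its purpose is to push $\sqrt\ell$ past $\sqrt{m+1}+1$, which is precisely the threshold at which $\sqrt\ell\cdot\frac{\sqrt{m+1}-1}{m}\ge1$, hence at which the $\sqrt\ell$ correction in $E[Y]$ is dominated by $\ell\mu$, i.e.\ $\frac{\sigma\sqrt{2\numvars}}{2}(2\sqrt\ell-1)\le\ell\mu$. Granting that absorption, the numerator is at most $2\ell\mu$, and the remaining target becomes $(3m-2\ell)\mu\ge3\sigma\sqrt{2\numvars}(\sqrt{m+1}-1)$, which I would try to close with the same lower bound on $\mu$ together with the size relation $\ell\le\numvars/2$.

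The step I expect to be the main obstacle is this last one — the interplay between the size parameters and the magnitude of the denominator. The $\mu$-hypothesis only certifies that $m\mu-\sigma\sqrt{2\numvars}(\sqrt{m+1}-1)$ is positive, not that it is a fixed fraction of $m\mu$, so one must really use the other two hypotheses to show that $\mu$ sits comfortably, rather than marginally, above the threshold $\sigma\sqrt{2\numvars}\,\frac{\sqrt{m+1}-1}{m}$; this is where $\ell\le\numvars/2$ has to be wielded in tandem with the exponential lower bound on $\ell$, so that $\numvars$ — and therefore the threshold — cannot be large relative to $\ell$. Constants also have to be tracked with care: $e^{\sqrt{m+1}-3/2}\ge\sqrt{m+1}+1$ only takes effect once $\sqrt{m+1}$ is past a small constant, so the finitely many smaller values of $m$ would be disposed of directly. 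Once the denominator is pinned down, everything downstream is the same flavor of elementary manipulation already carried out in the proof of Theorem~\ref{thm:finalbounds}.
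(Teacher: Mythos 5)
Your reduction is sound as far as it goes: the three hypotheses do place you in the second branch of Theorem~\ref{thm:finalbounds}, the denominator $m\mu-\sigma\sqrt{2k}\left(\sqrt{m+1}-1\right)$ is indeed positive, and clearing it correctly turns the claim into $(3m-\ell)\mu\ge\sigma\sqrt{2k}\bigl(3\sqrt{m+1}+\sqrt{\ell}-\tfrac{7}{2}\bigr)$. But the obstacle you flag at the end is not a technicality to be finessed; it is fatal to this route. The hypotheses allow $\ell$ as large as $k/2$ with $m$ fixed, and whenever $\ell>3m$ your target inequality pits a nonpositive left-hand side against a strictly positive right-hand side, so it is simply false. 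Restricting to $\ell\le 3m$ does not help either: nothing in the hypotheses bounds $\mu$ \emph{away} from the threshold $\sigma\sqrt{2k}\,\frac{\sqrt{m+1}-1}{m}$, so the denominator can be an arbitrarily small positive number while every hypothesis holds, and the ratio you are bounding is unbounded. Your intermediate target $(3m-2\ell)\mu\ge 3\sigma\sqrt{2k}\left(\sqrt{m+1}-1\right)$ already needs $\ell<\tfrac{3}{2}m$, which no hypothesis supplies, and the absorption step $\sqrt{\ell}\ge\sqrt{m+1}+1$ is not implied by $\ell>\exp(2\sqrt{m+1}-3)$ for small $m$ (at $m=1$ the exponential threshold is below $1$). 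The plan cannot be completed.

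For comparison, the paper's own proof does not take this route and never actually produces the constant $3$. It compares the quantity $\frac{2\ell\mu+\sigma\sqrt{2k}(1+\log\ell)}{2m\mu+2\sigma\sqrt{2k}(\sqrt{m+1}-1)}$ --- note the plus sign in the denominator and the logarithmic rather than square-root correction in the numerator, neither of which matches Theorems~\ref{thm:mlb} and~\ref{thm:lub} as stated --- against $\ell/m$, cancels, and lands on the condition $\frac{\log\ell+1}{\ell}\le\frac{2(\sqrt{m+1}-1)}{m}$, which it identifies with the threshold $\exp(2\sqrt{m+1}-3)$ and then states in the direction opposite to the corollary's hypothesis. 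So you have not missed a trick the paper possesses: your derivation is the honest one, and it shows that the second-branch bound does not reduce to $3$ under these hypotheses alone. The constant-$3$ conclusions that do survive are those tied to the first branch, where $\sigma k$ dominates both numerator and denominator (the $\mu\le 0$ case and Corollary~\ref{cor:asymp}).
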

\begin{corollary}
\label{cor:asymp}
If %%$\numvars \gg \ell$ or
$\sigma \gg \mu$, the constant bound of 3 will hold asymptotically.
\end{corollary}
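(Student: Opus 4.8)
The plan is to exploit the parametrized branch of the simplified bound in Theorem~\ref{thm:finalbounds}, namely that the expected approximation factor is at most $\varepsilon$ whenever $\varepsilon \geq 2$ and
\[
  \mu \leq \frac{\sigma\numvars(\varepsilon - 3)}{\ell - 2m + \varepsilon m} < \sigma\sqrt{2\numvars}\,\frac{\sqrt{m+1}-1}{m}.
\]
First I would make the hypothesis $\sigma \gg \mu$ precise: fix $\ell$, $m$, and $\numvars$ (subject to $\numvars \geq 2m$ and $\numvars \geq 2(\ell - 1)$, as Theorem~\ref{thm:finalbounds} requires) and consider the regime $\mu/\sigma \to 0$. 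Let $\varepsilon = 3 + \delta$ for an arbitrary fixed $\delta > 0$. Since $m \geq 1$ and $\ell \geq 1$, both $\ell - 2m + \varepsilon m = \ell + (1+\delta)m > 0$ and $\sqrt{2\numvars}\,\frac{\sqrt{m+1}-1}{m} > 0$, so the two displayed upper bounds on $\mu$ are each a fixed positive multiple of $\sigma$. Hence for all sufficiently small $\mu/\sigma$ --- that is, whenever $\sigma$ is sufficiently large relative to $\mu$ --- both inequalities hold, and the expected approximation factor is bounded above by $3 + \delta$.

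Letting $\delta \downarrow 0$ then shows that the least upper bound obtainable from Theorem~\ref{thm:finalbounds} in this regime is $3$, which is the claimed asymptotic constant. For a more quantitative version one can instead start from the first (unsimplified) case of Theorem~\ref{thm:finalbounds}: under $\mu < \sigma\sqrt{2\numvars}\,\frac{\sqrt{m+1}-1}{m}$ the factor is at most $\frac{\ell\mu - 2m\mu + 3\sigma\numvars}{\sigma\numvars - m\mu}$, and dividing numerator and denominator by $\sigma\numvars$ gives $\frac{3 + (\ell - 2m)\mu/(\sigma\numvars)}{1 - m\mu/(\sigma\numvars)}$; since $m/\numvars \leq \tfrac12$, the denominator stays positive once $\mu/\sigma < 2$, and a first-order expansion yields the bound $3 + \frac{(\ell + m)\mu}{\sigma\numvars} + O\!\big((\mu/\sigma)^2\big)$, which both exhibits the rate of convergence to $3$ and shows the approach is from above.

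The only real obstacle is one of bookkeeping: one must state the limiting regime carefully enough that ``$\sigma \gg \mu$'' has content, keeping $\ell$, $m$, $\numvars$ controlled so that the coefficients $(\ell - 2m)$ and $m$ in the normalized expression do not themselves diverge and spoil the limit, and one must verify that this regime falls into the first/parametrized case of Theorem~\ref{thm:finalbounds} rather than the second --- the second being precisely the large-$\mu$ regime already handled by Corollary~\ref{cor:ub}. No analytic tools beyond Theorem~\ref{thm:finalbounds} itself are required.
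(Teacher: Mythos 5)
Your second, ``quantitative'' argument is exactly the paper's proof: the paper simply notes that under $\sigma \gg \mu$ the first case of Theorem~\ref{thm:finalbounds} applies and $\frac{\ell\mu - 2m\mu + 3\sigma k}{\sigma k - m\mu} = \Theta(3)$, which is precisely your normalized expression $\frac{3 + (\ell-2m)\mu/(\sigma k)}{1 - m\mu/(\sigma k)} \rightarrow 3$, so the approaches coincide. The preliminary detour through the $\varepsilon$-branch is unnecessary and slightly misstated --- the chained inequality $\frac{\sigma k(\varepsilon-3)}{\ell-2m+\varepsilon m} < \sigma\sqrt{2k}\frac{\sqrt{m+1}-1}{m}$ does not involve $\mu$ and so cannot be secured by shrinking $\mu/\sigma$ (it instead constrains $\delta$) --- but since you also give the direct argument, the proof stands.
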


\section{Discussion}
\label{sec:discussion}

Theorem~\ref{thm:finalbounds} is a rather surprising result: If we
know that the size of the optimal solution is bounded below by $m$ and the
average edge weight in a random feasible solution is small,
then \emph{any} randomly chosen solution of size at most $\ell$ will,
on average, be a constant factor of optimal.  If the edges are
weighted from a distribution with a nonpositive mean, then any
randomly chosen feasible solution of any size is guaranteed to be, on
average, no more than three times the cost of the optimal solution.

Consider the Steiner network problem as an example.  If there are
$\alpha$ terminals then we know that the optimal solution must have at
least $\left\lfloor\frac{\alpha}{2}\right\rfloor$ edges (this is $m$).
Any feasible solution to the problem is going to be an acyclic graph,
which will have at most $n-1$ edges (this is $\ell$).  The number of
random variables, $\numvars$, is equal to the number of edges, which
is bounded above by ${n \choose 2}$.  If $\mu$ is
small\footnote{\textit{i.e.}, the first case of the piecewise bound of
Theorem~\ref{thm:finalbounds} is satisfied.} then, by
Theorem~\ref{thm:finalbounds}, the expected constant of approximation
for any randomly chosen feasible solution to the Steiner network
problem is bounded above by
\begin{displaymath}
  \frac{\mu(n-1) + 3\sigma{n \choose 2} - 2\mu\left\lfloor\frac{\alpha}{2}\right\rfloor}{\sigma{n \choose 2} - \mu\left\lfloor\frac{\alpha}{2}\right\rfloor},
\end{displaymath}
which quickly\footnote{Convergence is superlinear, as confirmed by d'Alembert's ratio test.} converges to $3$ as $n \rightarrow \infty$.

%% Equation~(\ref{equ:expratio}) has even stronger implications for
%% problems like minimum spanning tree.  In that case, we know that
%% (assuming the graph is connected) the optimal solution has exactly
%% $n-1$ edges, and any feasible solution will also have exactly $n-1$
%% edges.  Therefore, any randomly chosen feasible solution to the
%% minimum spanning tree problem will with high probability be a constant
%% factor of optimal.

\section{Empirical Evidence}
\label{sec:evaluation}

In order to demonstrate the validity and tightness of the bound, we
empirically evaluate the optimality of randomly chosen feasible
solutions to combinatorial optimization problems.  First, let us
consider a relatively simple problem that is in $\mathbf{P}$: minimum
spanning tree.  We will later also examine the
$\mathbf{NP}$-\textsc{Complete} Steiner network problem.

Random graphs are generated using the Erd\H{o}s-R\'enyi model $G(n,m)$
with $n \in \{4, \ldots, 30\}$ vertices and $m \in \{3, \ldots,
{n \choose 2}\}$ edges.  The edges of the random graphs are weighted
according to various probability distributions (listed in the
forthcoming figures, below).  A random feasible solution is chosen by
solving the minimum spanning tree problem on the graph
while \emph{ignoring} edge weights.  While this will in effect
produce a ``random'' feasible solution that is independent of the edge
weights (and therefore also the optimization problem's objective
function), note that this method is not equivalent to drawing
uniformly from the set of all possible feasible solutions.  As such,
these results may slightly bias the random solutions' optimality.
The optimal solution is discovered by solving the problem taking edge
weights into account.  The constant of approximation is then
calculated by taking the ratio of the cost of the random feasible
solution over the cost of the optimal solution.  As we can see in
Figure~\ref{fig:randomst}, the randomly chosen solutions rapidly
converge to an approximation constant of 2.

\begin{figure}
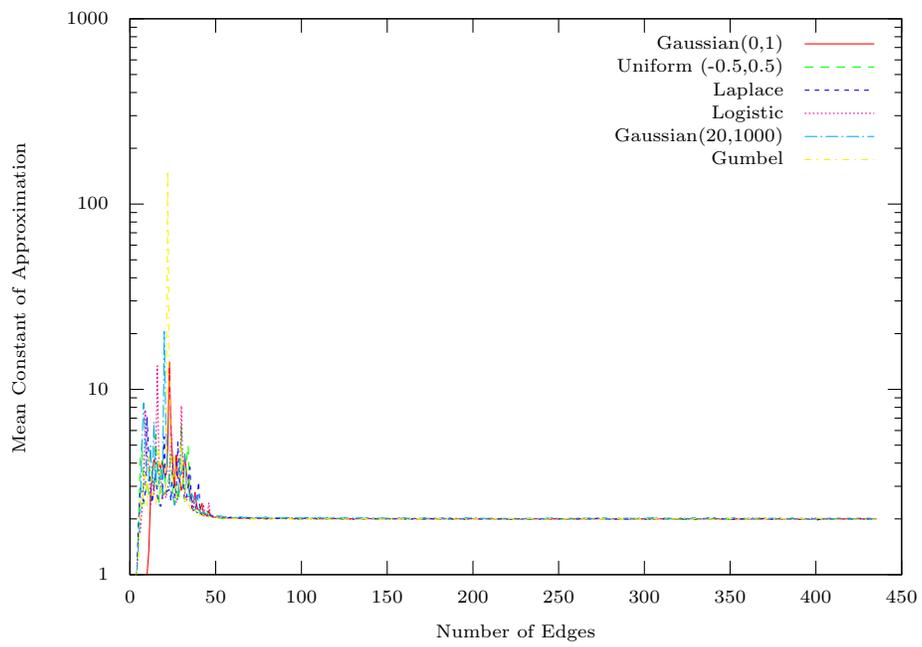
\scriptsize
% [inline block 0: 1 envs, 47184 chars -> data_tex | \begin{tikzpicture}[gnuplot] %% generated with GNUPLOT 4.6prc1 (Lua 5.1; terminal rev. 99, script rev. 100)...]

%% gnuplot variables

  %%\includegraphics{randomst}
  \caption{The average constant of approximation of randomly chosen feasible solutions to the minimum spanning tree problem converges to 2.}
  \label{fig:randomst}
\end{figure}

Next, let us consider the Steiner network problem.  The random graphs
are generated similarly to the minimum spanning tree case above.  The
number of terminals in the problem is set to $\frac{n}{2}$ and the
terminals are chosen randomly from the vertices according to a uniform
distribution.  The random feasible solutions are chosen by solving the
minimum spanning tree problem on the graph.  Since the Steiner network
problem is computationally intractable to solve optimally, though, we
are forced to compare the randomly chosen feasible solutions against a
bounded approximation to the optimal solution using a 2-Optimal
algorithm~\cite{goemans94improved}.  Therefore, if the random feasible
solutions have a cost no worse than the approximation algorithms, we
know that the cost of the feasible solutions are no more than twice
the cost of the optimal solution.  As we see in
Figure~\ref{fig:randomsteiner}, for almost all graphs the randomly
chosen solutions actually perform better than the approximation
algorithm, with improvement as the graphs get denser.

\begin{figure}
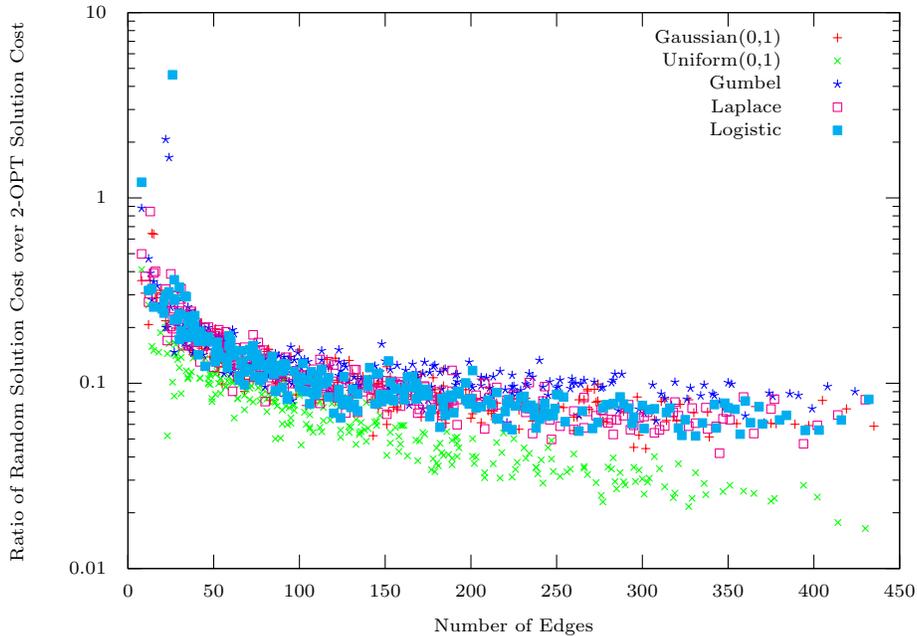
\scriptsize
% [inline block 1: 1 envs, 50197 chars -> data_tex | \begin{tikzpicture}[gnuplot] %% generated with GNUPLOT 4.6prc1 (Lua 5.1; terminal rev. 99, script rev. 100)...]

%% gnuplot variables

  \caption{Random feasible solutions to the Steiner network problem compared to a 2-OPT algorithm.  Values less than 1 imply that the random solution was lower cost than the approximation algorithm's solution.}
  \label{fig:randomsteiner}
\end{figure}

\section{Conclusions}

This paper investigated and bounded the expected solution quality of
combinatorial optimization problems when feasible solutions are chosen
at random.  A loose general bound was discovered
(Theorem~\ref{thm:finalbounds}), as well as families of combinatorial
optimization problems for which random feasible solutions are expected
to be a constant factor of optimal.  One implication of this result is
that, for graphical problems, if the average edge weight in a feasible
solution is sufficiently small, then any randomly chosen feasible
solution to the problem will be a constant factor of optimal.  For
example, under certain well-defined circumstances, the expected
constant of approximation of a randomly chosen feasible solution to
the Steiner network problem is bounded above by 3.  Empirical analysis
supports these bounds and actually suggest that they might be
tightened.

\section*{Acknowledgements}

This research was performed under The Johns Hopkins University Applied
Physics Laboratory research grant X9RSTHNA and the Stuart S. Janney
award.  Thanks are owed to Kevin Schultz and Ali Shokoufandeh for
their helpful comments and insight.

% begin appendix
\appendix

%% % add "Appendix" to the section heading
%% \newcommand{\appsection}[1]{\let\oldthesection\thesection
%%   \renewcommand{\thesection}{Appendix \oldthesection}
%%   \section{#1}\let\thesection\oldthesection}

\section{Proofs and Supplementary Theorems}
\label{appendix:proofs}

Let $H_{n,r}$ denote the generalized harmonic number of order $n$ of $r$:
\begin{displaymath}
  H_{n,r} = \sum_{i=1}^n \frac{1}{i^r}.
\end{displaymath}
\begin{lemma}
  \label{lem:bounds}
  \begin{displaymath}
    2\sqrt{n+1}-2 < H_{n,\frac{1}{2}} \leq 2\sqrt{n}-1.
  \end{displaymath}
\end{lemma}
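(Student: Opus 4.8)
The plan is to prove both inequalities by a telescoping comparison, exploiting the rationalization identity $\sqrt{i}-\sqrt{i-1} = \bigl(\sqrt{i}+\sqrt{i-1}\bigr)^{-1}$. The key elementary estimate is that, for every integer $i \geq 1$,
\begin{displaymath}
  2\bigl(\sqrt{i+1}-\sqrt{i}\bigr) \;<\; \frac{1}{\sqrt{i}} \;\leq\; 2\bigl(\sqrt{i}-\sqrt{i-1}\bigr),
\end{displaymath}
which follows at once from the chain $\sqrt{i}+\sqrt{i-1} \leq 2\sqrt{i} < \sqrt{i}+\sqrt{i+1}$ together with the rationalization identity applied to each difference; the left inequality is strict because $\sqrt{i} < \sqrt{i+1}$.

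For the lower bound I would sum the left-hand estimate over $i = 1, \ldots, n$. The resulting sum of $2(\sqrt{i+1}-\sqrt{i})$ telescopes to $2\bigl(\sqrt{n+1}-\sqrt{1}\bigr) = 2\sqrt{n+1}-2$, and since every individual inequality is strict, the summed inequality is strict as well, yielding $H_{n,1/2} > 2\sqrt{n+1}-2$.

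For the upper bound, summing the right-hand estimate over \emph{all} of $i = 1, \ldots, n$ telescopes only to $2\sqrt{n}$, which is weaker than claimed by an additive $1$. The remedy is to keep the $i=1$ term exactly — it equals $1$ — and apply the estimate only for $i \geq 2$:
\begin{displaymath}
  H_{n,1/2} = 1 + \sum_{i=2}^{n}\frac{1}{\sqrt{i}} \leq 1 + \sum_{i=2}^{n} 2\bigl(\sqrt{i}-\sqrt{i-1}\bigr) = 1 + 2\bigl(\sqrt{n}-1\bigr) = 2\sqrt{n}-1,
\end{displaymath}
which is exactly the stated bound (and is attained with equality at $n=1$). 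An equivalent route is a one-line induction on $n$ in each direction, the inductive step reducing in both cases to the monotonicity $\sqrt{n} \leq \sqrt{n+1}$.

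There is no genuine obstacle here; the only point requiring care is the constant bookkeeping — noticing that the naive two-sided telescoping wastes an additive $1$ on the upper side and that isolating the first term recovers it, and that the lower bound must be built from the \emph{forward} difference $\sqrt{i+1}-\sqrt{i}$ rather than the backward one so that the telescope lands on $\sqrt{n+1}$ instead of $\sqrt{n}$.
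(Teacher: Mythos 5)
Your proof is correct, and it reaches exactly the stated bounds, but it travels by algebra where the paper travels by calculus. The paper's proof notes that $1/\sqrt{x}$ is decreasing and compares the sum to $\int_1^{n+1} x^{-1/2}\,dx$ for the lower bound and to $1+\int_1^n x^{-1/2}\,dx$ for the upper bound (isolating the $i=1$ term just as you do). Your telescoping estimate is the same comparison in disguise: since $\int_i^{i+1} x^{-1/2}\,dx = 2\bigl(\sqrt{i+1}-\sqrt{i}\bigr)$, your two-sided inequality $2\bigl(\sqrt{i+1}-\sqrt{i}\bigr) < 1/\sqrt{i} \leq 2\bigl(\sqrt{i}-\sqrt{i-1}\bigr)$ is precisely the forward/backward rectangle-rule bound on each unit interval, derived via the rationalization identity $\sqrt{i}-\sqrt{i-1} = \bigl(\sqrt{i}+\sqrt{i-1}\bigr)^{-1}$ instead of via monotonicity and integration. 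What your route buys is that it is entirely elementary --- no integrals, no appeal to monotone comparison --- and it makes the strictness of the lower bound and the equality case at $n=1$ for the upper bound transparent termwise; what the paper's route buys is brevity and immediate generalization to other exponents $H_{n,r}$. Your bookkeeping (recovering the additive $1$ by keeping the $i=1$ term exact, and using the forward difference on the lower side so the telescope lands on $\sqrt{n+1}$) matches the paper's handling exactly.
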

\begin{proof}%%[Proof of Lemma~\ref{lem:bounds}]
  Since $\frac{1}{\sqrt{x}}$ is monotonically decreasing in $x$, we can bound $H_{n,\frac{1}{2}}$ by its definite integral using the forward and backward rectangle rules.
  For the lower bound,
  \begin{eqnarray*}
    H_{n,\frac{1}{2}} & = & \sum_{i=1}^n \frac{1}{\sqrt{i}}\\
    & > & \int_1^{n+1} \frac{1}{\sqrt{x}}\,dx\\
    & = & 2\sqrt{n+1} - 2.
  \end{eqnarray*}
  For the upper bound,
  \begin{eqnarray*}
    H_{n,\frac{1}{2}} & = & 1 + \sum_{i=2}^n \frac{1}{\sqrt{i}}\\
    & \leq & 1 + \int_1^n \frac{1}{\sqrt{x}}\,dx\\
    & = & 2\sqrt{n} - 1.
  \end{eqnarray*}  
\end{proof}
\begin{proof}[Proof of Theorem~\ref{thm:mlb}]
  The conditions of the implication in the theorem ensure that the bounds of Theorem~\ref{thm:origbounds} simplify to
  \begin{displaymath}
    E[X_{(r:\numvars)}] \geq \mu - \sigma\sqrt{\frac{\numvars}{2r}}.
  \end{displaymath}
  Applying this bound to the trimmed sum of the $m$ smallest order statistics results in
  \begin{displaymath}
    m\mu - \sum_{r=1}^m \sigma\sqrt{\frac{\numvars}{2r}}.
  \end{displaymath}
  Further simplification yields
  \begin{equation}
    \label{equ:msimplified}
    m\mu - \frac{1}{2}\sigma\sqrt{2\numvars}\sum_{r=1}^m \sqrt{\frac{1}{r}}.
  \end{equation}
  By Lemma~\ref{lem:bounds}, (\ref{equ:msimplified}) is bounded below by
  \begin{displaymath}
    m\mu - \sigma\sqrt{2\numvars}\left(\sqrt{m+1} - 1\right).
  \end{displaymath}
\end{proof}
\begin{proof}[Proof of Theorem~\ref{thm:lub}]
  The conditions of the implication in
  the theorem ensure that the bounds of Theorem~\ref{thm:origbounds}
  simplify to
  \begin{displaymath}
    E[X_{(r:\numvars)}] \leq \mu + \sigma\sqrt{\frac{\numvars}{2(\numvars-r+1)}}.
  \end{displaymath}
  Applying this bound to the trimmed sum of the $\ell$ largest order statistics results in
  \begin{displaymath}
    \ell\mu + \sum_{r=\numvars-\ell+1}^\numvars \sigma\sqrt{\frac{\numvars}{2\numvars - 2r + 2}}.
  \end{displaymath}
  Further simplification yields
  \begin{equation}
    \label{equ:lsimplified}
    \ell\mu + \frac{1}{2}\sigma\sqrt{2\numvars}\sum_{r=\numvars-\ell+1}^\numvars \sqrt{\frac{1}{\numvars - r + 1}}.
  \end{equation}
  The summation in (\ref{equ:lsimplified}) is equal to $H_{\ell,\frac{1}{2}}$:
  \begin{eqnarray*}
    \sum_{r=\numvars-\ell+1}^\numvars \sqrt{\frac{1}{\numvars - r + 1}} & = &  \sqrt{\frac{1}{\ell}} + \sqrt{\frac{1}{\ell - 1}} + \sqrt{\frac{1}{\ell - 2}} + \ldots + \sqrt{\frac{1}{1}}\\
    %% & = & \sum_{i=0}^{\ell-1} \frac{1}{\sqrt{\ell - i}}\\
    %% & = & \sum_{i=0}^{\ell-1} \frac{\sqrt{\ell - i}}{\ell - i}\\
    & = & \sum_{i=1}^{\ell}\frac{1}{\sqrt{i}}\\
    & = & H_{\ell,\frac{1}{2}},
  \end{eqnarray*}
  which, by Lemma~\ref{lem:bounds}, is bounded above by $2\sqrt{\ell}-1$.
  %% Due to the monotonicity of order statistics, $E[Z]$ will be maximal
  %% when $\ell = \numvars$ and $m = 1$.  This simplifies the bounds of
  %% Theorem~\ref{thm:origbounds} by satisfying
  %% $\frac{r}{\numvars} \leq \frac{1}{2}$ in the case of the lower bound
  %% on the denominator.  We can further fix 
\end{proof}
%% TODO: remove the over and under bars in the theorem!
\begin{proof}[Proof of Theorem~\ref{thm:finalbounds}]
  Let $\overline{E[Y]}$ be the upper bound of Theorem~\ref{thm:lub} and $\underline{E[Y^*]}$ be the lower bound of Theorem~\ref{thm:mlb}.  Note that
  \begin{displaymath}
    1 + \left|\frac{E[Y] - E[Y^*]}{E[Y^*]}\right| \leq 1 + \left|\frac{\overline{E[Y]} - \underline{E[Y^*]}}{\underline{E[Y^*]}}\right|,
  \end{displaymath}
  allowing us to proceed by bounding the expression using $\overline{E[Y]}$ and $\underline{E[Y^*]}$.
  Next, observe that %%Lemma~\ref{lem:ordering} implies
  $\overline{E[Y]} \geq \underline{E[Y^*]}$, so the absolute value in the numerator can be dropped:
  \begin{displaymath}
    1 + \left|\frac{E[Y] - E[Y^*]}{E[Y^*]}\right| \leq 1 + \frac{\overline{E[Y]} - \underline{E[Y^*]}}{\left|\underline{E[Y^*]}\right|}.
  \end{displaymath}
  There are therefore two cases to consider, depending on the sign of the denominator.
  Let us first consider the case when $\underline{E[Y^*]} < 0$.  Since $m$, $k$, and $\sigma$ are necessarily positive, observe that
  \begin{displaymath}
    \underline{E[Y^*]} < 0 \iff \mu < \sigma\sqrt{2\numvars}\frac{\sqrt{m+1}-1}{m},
  \end{displaymath}
  producing the first case in the piecewise function of the theorem.
  Under these circumstances, evaluating Equation~\ref{equ:approxbound} yields
  \begin{multline}
    \label{equ:firstcase}
    1 + \left|\frac{E[Y] - E[Y^*]}{E[Y^*]}\right| \leq 1 + \frac{\overline{E[Y]} - \underline{E[Y^*]}}{\underline{E[Y^*]}}\\= \displaystyle\frac{\ell\mu - 2m\mu + \sigma\sqrt{2\numvars}\left(\sqrt{\ell}+2\sqrt{m+1}-\frac{5}{2}\right)}{\sigma\sqrt{2\numvars}\left(\sqrt{m+1}-1\right)m\mu}.
  \end{multline}
  Since $m \geq 1 \implies \sigma\sqrt{2\numvars}\left(\sqrt{m+1}-1\right)\geq \sigma\numvars$ and
  \begin{displaymath}
    \ell, m \geq 1 \implies \sigma\sqrt{2\numvars}\left(\sqrt{\ell}+2\sqrt{m+1}-\frac{5}{2}\right) \leq 3\numvars,
  \end{displaymath}
  the bound can be relaxed to
  \begin{displaymath}
    (\ref{equ:firstcase}) \leq \frac{\ell\mu - 2m\mu + 3\sigma\numvars}{\sigma\numvars-m\mu}.
  \end{displaymath}

  In the second and final case, $\underline{E[Y^*]} > 0$ and 
  \begin{multline*}
     1 + \left|\frac{E[Y] - E[Y^*]}{E[Y^*]}\right| \leq 1 + \frac{\overline{E[Y]} - \underline{E[Y^*]}}{\underline{E[Y^*]}}\\= \displaystyle\frac{\ell\mu + \sigma\sqrt{2k}(\sqrt{\ell} - \frac{1}{2})}{m\mu - \sigma\sqrt{2k}(\sqrt{m+1}-1)}  \leq \frac{\ell\mu + \sigma k}{m\mu - \sigma k},
  \end{multline*}
  because $2(\ell-1), 2m \leq k$ implies $\sqrt{2k}(\sqrt{\ell}+2\sqrt{m+1} - \frac{5}{2}) \leq k$ and $\sqrt{2k}(\sqrt{m+1}-1) \leq k$.
\end{proof}

\begin{proof}[Proof of Corollary~\ref{cor:ub}]
 Evaluating the expression yields
  \begin{eqnarray*}
    \frac{2\ell\mu + \sigma\sqrt{2\numvars}\left(1 + \log\ell\right)}{2m\mu + 2\sigma\sqrt{2\numvars}\left(\sqrt{m+1}-1\right)} & \leq & \frac{\ell}{m}\\
    & \Downarrow &\\
    2\mu + \frac{\sigma\sqrt{2\numvars}\left(\log\ell + 1\right)}{\ell} & \leq & 2\mu + \frac{2\sigma\sqrt{2\numvars}\left(\sqrt{m+1}-1\right)}{m}\\
    & \Downarrow &\\
    \frac{\log\ell+1}{\ell} & \leq & \frac{2\left(\sqrt{m+1}-1\right)}{m}\\
    & \Downarrow &\\
    \ell & \leq & \exp(2\sqrt{m+1}-3),
  \end{eqnarray*}
  which is satisfied by the conditions of the corollary.
\end{proof}

\begin{proof}[Proof of Corollary~\ref{cor:asymp}]
  From Theorem~\ref{thm:finalbounds},
  \begin{displaymath}
    \sigma \gg \mu \implies 1 + \left|\frac{E[Y] - E[Y^*]}{E[Y^*]}\right| \leq \frac{\ell\mu - 2m\mu + 3\sigma k}{\sigma k - m\mu} = \Theta(3).
  \end{displaymath}
\end{proof}

\bibliographystyle{elsarticle-num}
\bibliography{random}

\begin{thebibliography}{1}
\expandafter\ifx\csname url\endcsname\relax
  \def\url#1{\texttt{#1}}\fi
\expandafter\ifx\csname urlprefix\endcsname\relax\def\urlprefix{URL }\fi
\expandafter\ifx\csname href\endcsname\relax
  \def\href#1#2{#2} \def\path#1{#1}\fi

\bibitem{goemans94improved}
M.~X. Goemans, A.~V. Goldberg, S.~Plotkin, D.~B. Shmoys, E.~Tardos, D.~P.
  Williamson, Improved approximation algorithms for network design problems,
  in: Proceedings of the 5th Annual {ACM-SIAM} Symposium on Discrete
  Algorithms, 1994, pp. 223--232.

\bibitem{aggarwal94scaling}
M.~Aggarwal, N.~Garg, {A Scaling Technique for Better Network Design}, in:
  Proceedings of the Fifth Annual {ACM-SIAM} Symposium on Discrete Algorithms,
  Society for Industrial and Applied Mathematics, Philadelphia, PA, USA, 1994,
  pp. 233--240.

\bibitem{gascuel92bounds}
O.~Gascuel, G.~Caraux, Bounds on expectations of order statistics via extremal
  dependences, Statistics \& Probability Letters 15~(2) (1992) 143--148.

\end{thebibliography}

\end{document}